\definecolor{darkblue}{rgb}{0,0,0.35}
\definecolor{darkred}{rgb}{0.6,0,0}
\definecolor{darkgreen}{rgb}{0.1,0.35,0}
\newcommand{\Rplus}{\mathbb{R}_+}
\newtheorem{theorem}{Theorem}
\newtheorem{lemma}[theorem]{Lemma}
\newtheorem*{gvpnconjecture}{Generalized VPN Conjecture}
\newtheorem*{gvpnconjecturetwo}{Generalized VPN Conjecture (II)}
\theoremstyle{definition}
\newtheorem{definition}[theorem]{Definition}
\newtheorem{remark}[theorem]{Remark}
\newtheorem*{rndproblem}{RND problem}
\newtheorem*{rndhhproblem}{\rndhh{} problem}
\newcommand{\emphdef}[1]{\textbf{#1}}
\newcommand{\universe}{\mathcal{U}}
\newcommand{\rndhh}{$\text{RND}_{\text{HH}}$}
\newcommand{\Pload}[2]{\ell(#1, #2)}
\renewcommand{\phi}{\varphi}
\let\oldvec\vec
\renewcommand{\vec}[1]{\boldsymbol{#1}}
\newcommand{\hmap}{\phi}
\newcommand{\emap}{\phi}
\newcommand{\Tb}{T^b}
\newcommand{\captree}[2]{#1^{#2}}
\newcommand{\hubtree}{hub tree}
\DeclareMathOperator{\fundamentalcut}{fund}
\newcommand{\fundcut}[2]{\fundamentalcut_{#1}(#2)}
\newcommand{\ind}[1]{[#1]}
\newcommand{\hh}{hubbing}
\newcommand{\treeUb}[2]{\kern0.1em\mathcal{U}(\captree{#1}{#2})}
\newcommand{\oriented}[2]{\oldvec{#1}_{#2}}
\begin{document}

\title{A note on hierarchical hubbing for a generalization of the VPN problem}

\author{Neil Olver\thanks{Department of Econometrics \& Operations Research, VU University Amsterdam, and Centrum Wiskunde \& Informatica (CWI), Netherlands. Email: \href{mailto:n.olver@vu.nl}{n.olver@vu.nl}. Research supported by a NWO Veni grant.}}

\maketitle

\begin{abstract}
    Robust network design refers to a class of optimization problems that occur when designing networks to efficiently handle variable demands.
    The notion of ``hierarchical hubbing'' was introduced (in the narrow context of a specific robust network design question), by Olver and Shepherd~\cite{OS2010}.
Hierarchical hubbing allows for routings with a  multiplicity of ``hubs'' which are connected to the terminals and to each other in a treelike fashion.
   Recently, Fr\'echette et al.~\cite{INFOCOM} explored this notion much more generally, focusing on its applicability to an extension of the well-studied hose model that allows for upper bounds on individual point-to-point demands. 
In this paper, we consider hierarchical hubbing in the context of a different (and extremely natural) generalization of the hose model, studied earlier in~\cite{OS2010}, and prove that the optimal hierarchical hubbing solution can be found efficiently. 
This result is relevant to a recently proposed generalization of the ``VPN Conjecture''.
\end{abstract}

\section{Introduction}
\subsection{Robust network design}
\emph{Robust network design} considers the problem of building networks under uncertainty in the pattern of utilization.
Introduced by Ben-Ameur and Kerivin~\cite{benameur03}, the framework encompasses the important case of the ``hose model'' introduced by Fingerhut~\cite{fingerhut97} and Duffield et al.~\cite{duffield99}.
It can itself be seen as falling under the broader umbrella of robust optimization~\cite{bental2009robust}.

We refer the reader to~\cite{myphdthesis} for a more in-depth treatment; here we will give a brief self-contained exposition of the model.
We are given an undirected graph $G = (V,E)$; this should be interpreted as an existing high-capacity network, in which we can reserve capacity.
We assume there is an unlimited total capacity on any given link of the network, and that the cost to reserve capacity on any link is a linear function of the capacity required.
Let $c: E \to \Rplus$ denote the per-unit cost of capacity on each edge.
A set $W \subseteq V$ of \emph{terminals} need to be adequately connected using the capacity reserved.

A \emph{traffic pattern} (or \emph{demand pattern}) describes the precise pairwise demand requirements at some moment in time.
It can be specified by a traffic matrix $D$, indexed by pairs of terminals; for terminals $i, j$, the entry $D_{ij}$ represents the bandwidth needed to send data from $i$ to $j$.
In our network, the traffic pattern is not fixed, but varying (and possibly uncertain).
To deal with this, the robust network design framework allows for a \emph{set} of traffic patterns to be prescribed.
This (it turns out) can always be taken to be a convex set, and so we describe this set, or \emph{demand universe}, as a convex body $\universe \subset \Rplus^{W \times W}$.

The robust network design (RND) problem asks for the cheapest capacity reservation $u: E \to \Rplus$ that can support all traffic patterns in the specified universe $\universe$.
To fully specify the problem however, a further aspect must be considered: the \emph{routing scheme}.
The coarsest division is into \emph{oblivious} or \emph{dynamic} routing.
In dynamic routing, the way in which traffic is routed may vary arbitrarily as a function of the current traffic pattern.
This is typically infeasible, and we will be concerned here with the more practical oblivious routing, where the routing used for any given pair of terminals is specified in advance.
We will also only consider \emph{single-path routing}.
The routing scheme in this case is described by a template $\mathcal{P} = \{P_{ij}: i, j \in W\}$, where $P_{ij}$ is an $i$-$j$-path for each $i, j \in W$.
(We do not require this path to be simple.)

We may summarize the general robust network design problem (with oblivious, single-path routing) as follows:

\begin{rndproblem}
Given an undirected graph $G=(V,E)$ with edge costs $c(e)$, a terminal set $W \subseteq V$, and a convex demand universe $\universe \subset \Rplus^{{W \times W}}$, a solution to the robust network design problem consists of a routing template $\mathcal{P} = \{ P_{ij}: i,j \in W\}$, and a capacity allocation $u: E \to \Rplus$, such that $\universe$ can be routed according to $\mathcal{P}$ within the capacity $u$, i.e., 
\begin{equation}\label{eq:u}
    u(e) \geq \max_{D \in \universe} \sum_{i,j \in W} D_{ij}\Pload{P_{ij}}{e} \qquad \forall e \in E.
\end{equation}
Here, $\Pload{P}{e}$ gives the number of times that edge $e$ occurs on the (possibly non-simple) path $P$.
\end{rndproblem}

The difficulty in this optimization problem lies in choosing the routing template; once this is fixed, the optimal capacity allocation can be determined by solving a convex program described by \eqref{eq:u}, assuming we have access to at least a separation oracle for $\universe$.

Note that there is always an optimum solution template whose paths $P_{ij}$ are all simple, since any non-simple path can simply be replaced by a simple path within its support.
The reason we allow non-simple paths is related to the specific type of routing templates we will be interested in.

A case of interest is that of \emph{symmetric} demands; this means that demand from $i$ to $j$ is not distinguishable from demand from $j$ to $i$.
In this case, which will concern us in this paper, it is convenient to consider $\universe$ to be a subset of $\binom{W}{2}$, the set of unordered pairs of terminals, so that $D_{ij} = D_{ji}$ refers to the same demand, and $P_{ij}=P_{ji}$ to the same path. Equation~\eqref{eq:u} then becomes
\begin{equation}\label{eq:u2}    u(e) \geq \max_{D \in \universe} \sum_{\{i,j\} \subseteq W} D_{ij}\Pload{P_{ij}}{e}. \end{equation}

The well-studied \emph{symmetric hose model}~\cite{fingerhut97, duffield99} is parameterized by a vector $b \in \Rplus^{W}$, yielding the universe
\[ \mathcal{H}(b) = \Bigl\{ D \in \Rplus^{\binom{W}{2}} : \sum_{\{i,j\} \subset W} D_{ij} \leq b_i \quad \forall i \in W\Bigr\}. \]
This models the situation where terminals are connected to the network with ``hoses'' of known, fixed capacity, so that the total demand involving terminal $i$ cannot exceed the capacity $b_i$ of its associated hose link.
Any demand pattern that fits through the hoses should be routable in the final network.
These hoses may model real links, or chosen based on operational criteria; either way, the hose model gives a simple, useful and concise description of what the network must be able to handle, making it a very popular model in the literature. 

A number of variations and generalizations of the symmetric hose model have been considered~\cite{eisenbrand2006provisioning, INFOCOM, OS2010, fingerhut97}.
For example, Fr\'echette et al.~\cite{INFOCOM} consider the ``capped'' hose model, where in adddition to the hose capacities $b$, point-to-point upper bounds $\Gamma \in \Rplus^{\binom{W}{2}}$ are also given, and the universe is 
\[ \mathcal{H}(b) \,\cap\, \Bigl\{ D \in \Rplus^{\binom{W}{2}} : D_{ij} \leq \Gamma_{ij} \quad \forall i,j \in W\Bigr\}. \]

\subsection{The generalized VPN problem}\label{sec:gvpn}
Rather then the capped hose model, we will be concerned with a different generalization of the hose model, introduced by Olver and Shepherd~\cite{OS2010}.
Let $T^b$ be an arbitrary capacitated tree, with nonnegative edge capacities $b$ and with leaf set in exact correspondence with the terminal set $W$.
We will use $T^b$ to define a demand universe in a simple and natural way: let $\treeUb{T}{b}$ consist of all demand patterns that can be routed on $T^b$.

The case where $T^b$ is a star corresponds precisely to the hose model; the capacity of the edge adjacent to terminal $i$ precisely gives the marginal of $i$.
This generalization allows the network operator more precise control over the demand universe, hopefully leading to more efficient solutions. 
In particular, if the terminals of the network can be logically divided into distinct groups (e.g., different branches of the company), with limited communication between groups, this information can be encoded via $\treeUb{T}{b}$.

It was shown in~\cite{OS2010} that the RND problem with oblivious routing for this class of demand universes is approximable to within a factor of $8$.
The algorithm that achieves this always returns a solution of a particular form---a \emph{hierarchical hubbing}.

\subsection{Hierarchical hubbing}

Olver and Shepherd~\cite{OS2010} proposed the following algorithm for the RND problem with universe $\treeUb{T}{b}$: find the cheapest embedding of $T^b$ into the network.
This embedding maps internal nodes of $T^b$ into the network (these are ``hubs''), and each edge $e$ of $T^b$ is mapped to a ``cable'' of capacity $b(e)$ that connects the hubs corresponding to the endpoints of the edge (see Figure~\ref{fig:embedding}).
More than one node of $T^b$ can be mapped to the same hub, and multiple cables may run over the same edge of the network.
The routing template associated with a hierarchical hubbing is obtained, for each $\{i,j\} \subseteq W$, as the image of the unique $i$-$j$-path in the tree under the mapping (again, see Figure~\ref{fig:embedding}), yielding a (possibly non-simple) $i$-$j$-path in $G$.
For any edge $e \in E(G)$, define the capacity $u(e)$to be the sum of the cable capacities that use edge $e$. 
It is easy to show that this template and capacity allocation provides a valid solution to the RND problem with universe $\treeUb{T}{b}$.

\begin{figure}
    \centering
    \includegraphics{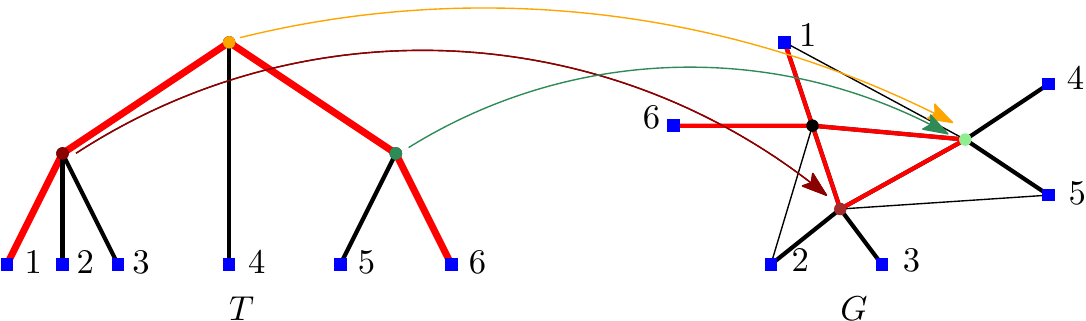}
    \caption{An example of an embedding of $T$, with the resulting routing from $1$ to $4$ indicated.}\label{fig:embedding}
\end{figure}

What is very pleasant about this restricted class of solutions is that the cheapest such solution can be found in polynomial time. (The optimization problem is closely related to the zero-extension problem on trees~\cite{karzanov1998metrics,calinescu2005approximation}.)
It is shown in~\cite{OS2010} that the resulting solution is within a constant factor of the optimal solution to the RND problem for $\treeUb{T}{b}$, where arbitrary routing templates are allowed.

\medskip

Fr\'echette et al.~\cite{INFOCOM} generalize this notion in two ways.
Firstly, they consider hierarchical hubbing templates for arbitrary demand universes, not just those associated with the generalized VPN problem.
Secondly, they consider the optimization problem where the tree used to define the hierarchical hubbing is not specified, but may be chosen as part of the solution.
More precisely, they define the following optimization problem:
\begin{rndhhproblem}
Given an undirected graph $G = (V,E)$ with edge costs $c(e)$, a terminal set $W \subseteq V$, and a convex demand universe $\universe \subset \Rplus^{{W \times W}}$, a solution to the \rndhh{} problem consists of \begin{inparaenum}[(i)] \item  a capacitated tree $T^b$, with leaf set $W$, such that any demand in $\universe$ can be routed on $T^b$, and \item an embedding of $T^b$. \end{inparaenum}
The capacity allocation $u(e)$ for an edge $e \in E$ is the sum of the capacities $b(f)$ of the cables that use edge $e$, and the goal is to minimize the final cost $\sum_{e \in E} c(e)u(e)$ of the solution.
\end{rndhhproblem}
It is easy to confirm that any solution to the \rndhh{} problem is a solution to the RND problem, but not vice versa.
So in general the optimal solution to \rndhh{} can be more expensive than the optimal RND solution; in fact, Fr\'echette et al.~\cite{INFOCOM} demonstrate that the gap can be $\Omega(\log |V|)$, for some choices of the universe.

Fr\'echette et al.~\cite{INFOCOM} are motivated to consider hierarchical hubbing for a few reasons.
In \emph{hub routing}, all traffic is routed via a single hub node; this has the advantage that routing decisions are localized at the hub.
In order to address some practical shortfalls of hub routing, Shepherd and Winzer~\cite{shepherdrlb06} ask for a ``multihub'' extension of this.
Fr\'echette et al.\ argue that hierarchical hubbing provides a natural such extension (note that it is clearly a generalization; hub routing corresponds to taking the hub tree to be a star).
They also show that it provides an effective heuristic for finding good solutions to the capped hose model mentioned earlier, which is APX-hard in the single-path oblivious routing model.
They observe that for the capped hose model (and for other universes as well), hierarchical hubbing can yield much cheaper solutions than using a single hub.
This is in constrast to the vanilla hose model, where a hub routing provides an optimal oblivious routing solution (see Section~\ref{sec:vpnconj}).

Finding the optimal hierarchical hubbing solution for a given universe is in general APX-hard, as observed in~\cite{AlexMSc} (cf.~\cite{kumar2002algorithms} for tree routings); this can be seen by choosing the universe so that the resulting RND problem is precisely the Steiner tree problem.
It is thus natural to ask for which universes the problem is polynomially solvable.
In fact, as essentially shown in~\cite{OS2010} (see Lemma~\ref{lem:dynamic} later), the hierarchical hubbing problem is solvable exactly in polynomial time if the choice of hub tree is specified.  
The difficulty is thus in identifying the correct optimal choice of the hub tree.
One case where the problem was previously known to be polynomial was for the hose model~\cite{guptavpn01}; this will be discussed more in Section~\ref{sec:vpnconj}.

\begin{remark}\label{rem:tree}
    Any tree routing---meaning a routing template $\mathcal{P} = \{ P_{ij}: i,j \in W\}$ such that $\bigcup_{i,j \in W} P_{ij}$ is a tree---can be described as a hierarchical hubbing.
    The hub tree is obtained from the support, adding additional edges as needed to ensure that all terminals are leaves in the hub tree.
\end{remark}
\begin{remark}
    It would also be natural to instead choose capacities by considering the routing template induced by the hierarchical hubbing, and using \eqref{eq:u2}.
    This alternative formulation is in general not the same as described above; there may be situations where not all cables on a given edge can be simultaneously saturated by a traffic pattern in $\mathcal{U}$, leading to a larger capacity requirement with the cable formulation.
    The formulation that we use in this paper, and which is also used in~\cite{INFOCOM}, seems overall easier to deal with (see, e.g., Lemma~\ref{lem:dynamic} later). 
    If the Generalized VPN Conjecture discussed in Section~\ref{sec:vpnconj} is true, it follows immediately that for the universe $\treeUb{T}{b}$, both formulations have a common optimal solution.
\end{remark}

\section{Hierarchical hubbing for the generalized VPN problem}

The main contribution of this note is
\begin{theorem}\label{thm:main}
    The optimal hierarchical hubbing solution for a generalized VPN problem can be found in polynomial time.
\end{theorem}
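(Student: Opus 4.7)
The plan is to reduce Theorem~\ref{thm:main} to the fixed-hub-tree case already handled by Lemma~\ref{lem:dynamic}, by establishing the following structural claim: for the universe $\treeUb{T}{b}$, there is always an optimal hub tree that is $T^b$ itself. Granting this, simply running the fixed-tree algorithm from Lemma~\ref{lem:dynamic} with input tree $T^b$ returns the optimal hierarchical hubbing solution in polynomial time. Observe that $T^b$ is trivially a valid hub tree for $\treeUb{T}{b}$, since every demand in the universe is by definition routable on $T^b$; the content of the claim is that no alternative hub tree can improve the cost.

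To prove the structural claim, I would take an arbitrary valid hub tree $T'^{b'}$ with an embedding $\hmap'$ of cost $C$, and construct from it an embedding of $T^b$ into $G$ of cost at most $C$. The key is a cut characterization of validity: $T'^{b'}$ is valid for $\treeUb{T}{b}$ if and only if for every edge $f$ of $T'$, with associated partition $(A_f,B_f)$ of the leaves $W$, the capacity $b'(f)$ is at least the maximum flow in $T^b$ between $A_f$ and $B_f$. Since $T^b$ is a tree, max-flow/min-cut gives that this quantity equals the minimum total $b$-capacity on any cut of $T^b$ separating $A_f$ from $B_f$.

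The new embedding is produced by factoring $\hmap'$ through a tree-into-tree routing $\psi: T^b \to T'^{b'}$. Concretely, $\psi$ should assign each internal node of $T^b$ to a node of $T'$ (with leaves mapped by the identity on $W$) and thereby route each edge of $T^b$ along the unique path in $T'$ between its endpoints' images, in such a way that for every edge $f$ of $T'$ the total $b$-load crossing $f$ does not exceed $b'(f)$. Composing $\hmap = \hmap' \circ \psi$ then yields an embedding of $T^b$ in $G$, and the capacity bound on $\psi$ guarantees that on every edge of $G$ the capacity placed by $\hmap$ is dominated by the capacity placed by $\hmap'$, so $c \cdot u_\hmap \le c \cdot u_{\hmap'} = C$.

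The main obstacle is producing the routing $\psi$ coherently, since the assignment of $T^b$'s internal vertices must simultaneously satisfy the capacity inequality on every edge of $T'$. I expect this to follow from an inductive argument on $T'$: at any leaf edge $f$ of $T'$ (with $A_f = \{w\}$) the capacity condition is immediate; at a general edge $f$, one peels off a pendant subtree of $T'$ and uses the cut characterization together with a flow-decomposition of the min-cut in $T^b$ between $A_f$ and $B_f$ to assign the ``$T^b$-mass'' appropriately between the two sides before recursing on the rest. Once the existence of a capacity-respecting $\psi$ is established, the structural claim and therefore Theorem~\ref{thm:main} follow.
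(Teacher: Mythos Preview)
Your overall strategy is exactly the one the paper uses: show that among all hub trees valid for $\treeUb{T}{b}$ the tree $T^b$ itself is optimal (the paper's Theorem~\ref{thm:opthh}), and deduce Theorem~\ref{thm:main} from Lemma~\ref{lem:dynamic}. You also correctly identify that the heart of the argument is a tree-into-tree embedding $\psi: T^b \to T'$ that respects capacities, which is then composed with the given embedding $\hmap'$ of $T'$ into $G$; this is precisely the paper's two-stage proof (first tree networks, then composition).

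The gap is at the step you yourself flag as the ``main obstacle'': constructing $\psi$ so that the placements of the internal vertices of $T^b$ are \emph{simultaneously} consistent across all edges of $T'$. Your proposed inductive peeling with a ``flow-decomposition of the min-cut'' is too vague to work as stated. Handling one edge $f$ of $T'$ gives you a bipartition of $V(T)$; handling a second edge $f'$ gives another; the issue is guaranteeing that all these bipartitions are nested, so that they jointly determine a single map $\psi$. Nothing in a naive induction forces this, and an arbitrary choice of min-cut for each $f$ can easily be inconsistent. The paper resolves this globally: for each edge $e$ of $T'$ it takes the minimum $(W_e, W\setminus W_e)$-cut $S_e$ in $T^b$, breaking ties by minimum cardinality, and then uses submodular uncrossing (Lemma~\ref{lem:uncrossing}) to show that the family $\{S_e\}$ is laminar in the right way. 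Concretely, for each internal node $w$ of $T$ it orients every edge $e$ of $T'$ according to whether $w \in S_e$, proves that this orientation has a unique sink, and maps $w$ there. The capacity bound on each edge of $T'$ is then exactly $b(\delta(S_e))$, which by max-flow/min-cut equals the required capacity. Your sketch is on the right track, but without the uncrossing/tie-breaking step you do not have a proof; this is the one nontrivial idea that needs to be supplied.
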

Thus \rndhh{} is polynomially solvable for a large and interesting class of demand universes.
A further motivation for this result is its connection with a conjecture about the polynomial solvability of the generalized VPN problem with abitrary oblivious single path routings; this is discussed in detail in Section~\ref{sec:vpnconj}.

\subsection{Preliminaries}
We will use the Iverson bracket $[A]$ to denote the indicator function of a predicate $A$.
Given two disjoint subsets $A, B$ of vertices in some graph $G = (V,E)$, an $(A, B)$ cut is any set $S \subseteq V$ that separates $A$ from $B$.
Given a tree $T$ and edge capacities $b: E(T) \to \Rplus$, $\Tb$ will denote the resulting edge-capacitated tree.

\begin{definition}
    We say a tree $T$ is a \emphdef{\hubtree} if its leaf set is in correspondence with the terminal set $W$ of the instance.
\end{definition}

\begin{definition}
    For any \hubtree{} $T$, a \emphdef{$T$-embedding} (into $G$) is a map $\phi: V(T) \cup E(T) \to V(G) \cup E(G)$ such that
    \begin{enumerate}[(i)]
        \item $\phi(v) \in V(G)$ for all $v \in V(T)$,
        \item $\phi(i) = i$ for all $i \in W$, and
        \item $\phi(vw)$ is a simple $\phi(v)$-$\phi(w)$ path in $G$ for each $vw \in E(T)$.
    \end{enumerate}

      \end{definition}
The restriction to simple paths in the above definition is not necessary, but will be notationally convenient; in any case, there is no advantage to using non-simple paths. But again note that for two terminals $i,j \in W$, the $i$-$j$-path induced by $\phi$, obtained by considering the path $iv_1v_2\cdots v_{t-1}j$ in $T$ and concatenating the paths $\phi(iv_1), \phi(v_1v_2), \cdots, \phi(v_{t-1}i)$, may still be non-simple.

\begin{definition}\label{def:hhuniverse}
    A \emphdef{$T$-hubbing} for $\universe$ (into $G$) is a pair $(\phi, u)$ where $\phi$ is a $T$-embedding and $u$ is a corresponding capacity allocation satisfying
    \[ u(e) \geq \sum_{f \in E(T): e \in \emap(f)} b(f), \]
    where for each $f \in E(T)$, $b(f)$ is the maximum load on $f$ induced by some demand in $\universe$.
 (In reference to the earlier discussion, $b(f)$ is the capacity of the ``cable'' associated with edge $f$.)

    A \emphdef{hierarchical hubbing for $\universe$} is simply a $T$-\hh{} for $\universe$ for some choice of \hubtree{} $T$.

    The cost of a hierarchical hubbing solution is defined simply as the cost of the associated capacity allocation.
    \end{definition}

For a given \hubtree{} $T$, it is possible that $\treeUb{T}{b} = \treeUb{T}{b'}$ for distinct capacities $b, b'$.
The following definition is convenient:
\begin{definition}\label{def:defining}
    Edge capacities $b$ for a tree $T$ are called \emphdef{defining} if for every $e \in E(T)$, there exists a $D \in \treeUb{T}{b}$ that saturates edge $e$ in $T^b$.
\end{definition}
Given $T^{b'}$, defining capacities $b$ such that $\treeUb{T}{b} = \treeUb{T}{b'}$ can easily be found, solving one maximum flow problem per edge of $T$.

\subsection{Main theorem}

We will prove (recalling the definition of $\treeUb{T}{b}$ from Section~\ref{sec:gvpn})
\begin{theorem}\label{thm:opthh}
        For a given hub tree $T$ and capacities $b$, the optimal hierarchical hubbing for $\treeUb{T}{b}$ is a $T$-hubbing.
\end{theorem}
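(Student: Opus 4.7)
The plan is to show that any hierarchical hubbing for $\treeUb{T}{b}$---say a $T'$-hubbing $(\phi',u')$ for some hub tree $T'$ with (defining, by Definition~\ref{def:hhuniverse}) capacities $b'$---can be transformed into a $T$-hubbing of no greater cost. I would begin with the harmless reduction, justified by the discussion following Definition~\ref{def:defining}, that replaces $b$ by the defining capacities for the same universe, so that the $T$-hubbing's cable capacities coincide with $b$. The $T$-embedding $\phi$ is then defined by first choosing a suitable map $\sigma:V(T)\to V(T')$ that fixes $W$ pointwise, then setting $\phi(v):=\phi'(\sigma(v))$ and routing each cable $\phi(f)$ along a shortest $\phi(u_f)$-$\phi(v_f)$ path in $G$.

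Applying the triangle inequality to bound the distance between the images of the endpoints of each $f\in E(T)$ by the length of the corresponding path through the embedded $T'$, and exchanging the order of summation, the cost of the resulting $T$-hubbing is at most
\[
  \sum_{f'\in E(T')} d(\phi'(u_{f'}),\phi'(v_{f'})) \cdot b\bigl(\partial_T(\sigma^{-1}(X'_{f'}))\bigr),
\]
where $X'_{f'}$ denotes either side of the cut of $V(T')$ induced by $f'$. Since $\sigma$ fixes $W$, the preimage $\sigma^{-1}(X'_{f'})$ is a valid $T^b$-cut separating $X'_{f'}\cap W$ from $W\setminus X'_{f'}$, and by max-flow/min-cut on the tree $T^b$ together with the defining property of $b'$, the minimum value of such a cut is exactly $b'(f')$. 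Hence, provided $\sigma$ can be chosen so that $\sigma^{-1}(X'_{f'})$ is itself a \emph{minimum} $T^b$-cut for every $f'\in E(T')$, the above display is bounded by the cost of the $T'$-hubbing, finishing the proof.

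The real work is therefore the construction of $\sigma$. After rooting $T'$ at an arbitrary terminal, the cuts $\{X'_{f'}\}$ become the laminar family of rooted subtrees $\{T'_{w'}\}_{w'\neq r}$, and producing $\sigma$ amounts to selecting, for each non-root $w'$, a minimum $T^b$-cut $C^*_{w'}$ with $C^*_{w'}\cap W = T'_{w'}\cap W$ in such a way that the family $\{C^*_{w'}\}$ is laminar on $V(T)$ and its containment/disjointness pattern mirrors the descendant relation on $V(T')$; the map $\sigma$ is then recovered by setting $\sigma(v)$ to the deepest $w'$ with $v\in C^*_{w'}$. To produce such a family I would apply iterative uncrossing to an arbitrary initial collection of minimum cuts: a pair $(C,C')$ that should be nested but crosses is replaced by $(C\cap C',\,C\cup C')$, and a pair that should be disjoint but overlaps is replaced by $(C\setminus C',\,C'\setminus C)$. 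Submodularity of the $T^b$-cut function preserves minimality at each step (in the disjoint case it additionally forces any edges between $C\cap C'$ and $V(T)\setminus(C\cup C')$ to vanish). The main technical obstacle I anticipate is setting up a monovariant guaranteeing that this uncrossing procedure terminates at a fully laminar family respecting the structure of $T'$.
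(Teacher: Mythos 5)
Your high-level route is essentially the paper's: identify each cable capacity $b'(f')$ of the given hubbing with a minimum-cut value in $\Tb$ via max-flow/min-cut, map each internal node of $T$ to a node of the given hub tree $T'$ using a system of minimum cuts whose traces on $W$ match the fundamental cuts of $T'$, and conclude by an exchange of summation. (The paper does the bookkeeping slightly differently: it first constructs a $T$-hubbing into the tree $T'$ itself with the canonical capacities $q^*$, then composes with the embedding of $T'$ into $G$, which yields capacity domination on every edge of $G$ rather than only a cost bound; but the accounting is the same, and a cost bound suffices for the theorem as stated.) Your cost computation, the identification of $b'(f')$ with the minimum cut value, and the recovery of $\sigma$ from a laminar family by taking the deepest set containing $v$ are all correct.

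The genuine gap is exactly the step you flag at the end: the existence of minimum cuts $C^*_{w'}$ in $\Tb$ that \emph{simultaneously} respect all the nesting and disjointness relations imposed by $T'$. This is the crux of the theorem, and your sketch leaves it open. As stated, the iterative uncrossing is delicate: a replacement that fixes one pair (e.g.\ shrinking $C_2$ to $C_2\setminus C_3$ to enforce disjointness) can destroy a containment $C_1\subseteq C_2$ established earlier, so termination at a globally consistent family needs a real argument; a lexicographic potential (total size, then minus the sum of squared sizes) would work, since the disjoint fix strictly decreases the former while the nested fix preserves it and strictly increases the latter, but none of this is in your proposal. The paper avoids iteration entirely by a canonical choice: for each bipartition of $W$ take the minimum cut of minimum cardinality (equivalently, the unique inclusion-wise minimal minimum cut). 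Pairwise consistency of these canonical cuts is then immediate from one application of Lemma~\ref{lem:uncrossing}: if the traces on $W$ are nested and the cuts crossed, intersecting would yield a strictly smaller minimum cut, contradicting minimality, and the disjoint case reduces to the nested one by replacing one cut by its complement (symmetry of the cut function), or via posimodularity as in your parenthetical. Substituting this canonical choice for your uncrossing loop, your construction of $\sigma$ and your cost bound go through, and you essentially recover the paper's proof. A minor remark: the preprocessing to defining capacities $b$ is not actually needed for your inequality, since what enters the bound is the $b$-value of a \emph{minimum} cut, which equals the max-flow value $b'(f')$ regardless; it is harmless, though.
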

We may assume in what follows that the capacities $b$ are defining, since replacing the capacities by defining capacities can be done as a preprocessing step.
Combining this theorem with the following simple but key algorithmic result about hierarchical hubbing, Theorem~\ref{thm:main} immediately follows.
\begin{lemma}[\cite{OS2010}]\label{lem:dynamic}
    For any \hubtree{} $T$ with edge capacities $b$, the optimal $T$-hubbing for $\treeUb{T}{b}$ can be found in polynomial time.
    \end{lemma}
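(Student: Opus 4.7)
The plan is to reduce the problem to a dynamic program on $T$ after observing that the cost of a $T$-hubbing decouples over the edges of $T$. First, note that in Definition~\ref{def:hhuniverse} the optimal choice of $u$ given $\phi$ is simply $u(e) = \sum_{f \in E(T): e \in \phi(f)} b(f)$, so the total cost becomes
\[
    \sum_{e \in E(G)} c(e) u(e) \;=\; \sum_{f \in E(T)} b(f) \sum_{e \in \phi(f)} c(e) \;=\; \sum_{f \in E(T)} b(f)\, c(\phi(f)),
\]
where $c(\phi(f))$ is the $c$-cost of the path assigned to edge $f$. Consequently, for any fixed vertex embedding $\phi|_{V(T)}$, the best choice of the path $\phi(vw)$ for each $vw \in E(T)$ is independently a shortest $\phi(v)$-$\phi(w)$ path in $G$ under edge costs $c$. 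So as a preprocessing step I would compute all-pairs shortest path distances $d_G(x,y)$ in $G$.

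Next, I would root $T$ at an arbitrary internal node $r$ and solve the resulting vertex-embedding problem by tree DP. Define
\[
    M[v][x] \;=\; \text{minimum cost to embed the subtree of $T$ rooted at $v$, subject to $\phi(v)=x$.}
\]
For a leaf $v \in W$, set $M[v][v] = 0$ and $M[v][x] = +\infty$ for $x \neq v$ (terminals are pinned by condition (ii) of a $T$-embedding). For an internal node $v$ with children $w_1,\dots,w_k$ in $T$, the recursion is
\[
    M[v][x] \;=\; \sum_{j=1}^{k} \min_{y \in V(G)} \Bigl(\, b(vw_j)\, d_G(x,y) \;+\; M[w_j][y] \,\Bigr),
\]
which is correct because, by the decoupling above, the subtrees rooted at different children contribute independently once $\phi(v)$ is fixed, and each child's contribution is minimized by picking its embedded location $y$ and its connecting path separately. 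The optimum is $\min_{x \in V(G)} M[r][x]$, and the embedding is recovered by standard back-pointers; a shortest path realizing $d_G(x,y)$ yields $\phi(vw_j)$.

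The running time is dominated by filling the $O(|V(T)| \cdot |V(G)|)$ table with $O(|V(G)|)$ work per child, giving $O(|V(T)| \cdot |V(G)|^2)$ after the all-pairs shortest paths preprocessing, which is polynomial. There is no real obstacle: the only point worth being careful about is confirming that the cost actually separates over the edges of $T$ (justifying the independence used in the recursion) and that the terminal-fixing condition of a $T$-embedding is correctly encoded in the DP base case. The structural similarity to $0$-extension on trees mentioned in the introduction is visible here: each internal node of $T$ is independently assigned to a vertex of $G$, with pairwise costs along the edges of $T$ weighted by the capacities $b(f)$.
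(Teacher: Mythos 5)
Your proof is correct and follows exactly the route the paper sketches for this lemma (which it cites from~\cite{OS2010} and describes only as ``a simple dynamic programming one which finds the optimal placement of the hubs,'' with each $\phi(vw)$ taken to be a shortest $\phi(v)$-$\phi(w)$ path): the cost decouples over $E(T)$, so shortest paths are optimal for each cable and a tree DP over $T$ finds the optimal hub placement. The only cosmetic caveat is that the cable capacity of $f$ in Definition~\ref{def:hhuniverse} is the maximum load on $f$ induced by $\treeUb{T}{b}$, which equals $b(f)$ only after replacing $b$ by defining capacities---a preprocessing step the paper already notes.
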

The algorithm is a simple dynamic programming one which finds the optimal placement of the hubs.
Note that in any optimal solution, the path $\phi(vw)$ used to route some $vw \in E(T)$ can always be taken to be a shortest path between $\phi(v)$ and $\phi(w)$.

Theorem~\ref{thm:opthh} is certainly very natural, and it might seem even trivial at first glance.
But consider, for example, the case where $T$ is simply a star, and hence represents a hose model universe.
Then recalling Remark~\ref{rem:tree}---any tree routing is a hierarchical hubbing---this theorem includes the fact that the optimal tree routing under the hose model is a hub routing, a result of Gupta et al.~\cite{guptavpn01}.
Further discussion on this, and connections with the ``VPN Conjecture'' may be found in Section~\ref{sec:vpnconj}.

\medskip

We will in fact prove something stronger than Theorem~\ref{thm:opthh}: given any hierarchical hubbing for $\treeUb{T}{b}$ with capacity allocation $u$, there is a $T$-hubbing for $\treeUb{T}{b}$ which requires no more than capacity $u(e)$ on each edge $e$.
We begin by proving this result for the case where the network is itself a tree, with terminals forming the leaf set of the tree.
We will then observe that the result for an arbitrary network follows easily.

\paragraph*{Tree networks}
Let $F$ be an arbitrary tree, with $W$ denoting the leaves of $F$.
Let $\fundcut{F}{e}$ denote the set of all pairs of leaves that are separated by edge $e \in E(F)$.
Notice that there is an obviously optimal oblivious routing solution for the RND problem with universe $\treeUb{T}{b}$ in $F$: use the template $\mathcal{P} = \{ P_{ij}: i,j \in W\}$ where $P_{ij}$ is the unique simple path from $i$ to $j$ in $F$.
The minimal required capacity $q^*(e)$ for any edge $e \in E(F)$ is then, from \eqref{eq:u2},
\begin{equation}\label{eq:q} q^*(e) =  \max_{D \in \treeUb{T}{b}}\sum_{\{i,j\} \in \fundcut{F}{e}} D_{ij}. \end{equation}
Note that \emph{any} oblivious routing template will need capacity $q^*(e)$ at least on each $e \in E(F)$, since any $i$-$j$ path in $F$ contains $P_{ij}$.

Observe also that this oblivious routing template is a $F$-\hh{} template, induced by the trivial $F$-embedding that maps $F$ to itself.
Since each cable in this embedding is a single edge, combining this $F$-embedding with the capacity allocation $q^*$ yields an $F$-hubbing for $\treeUb{T}{b}$.

The next theorem, which is the key technical theorem of this paper, shows that there is a $T$-hubbing for $\treeUb{T}{b}$ with capacity allocation $q^*$.
In general, the $T$-embedding that certifies this will induce \emph{non-simple} paths; despite this, extra capacity will not be needed.

\begin{theorem}\label{thm:opthhtree}
    Let $F$ be any tree with leaf set $W$, and let $q^*: E(F) \to \Rplus$ be as in \eqref{eq:q}.
    Then there is a $T$-hubbing for $\treeUb{T}{b}$ into $F$ with capacity allocation $q^*$.
\end{theorem}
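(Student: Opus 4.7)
The plan is to construct a $T$-embedding $\phi\colon V(T) \to V(F)$, fixing terminals and mapping each edge of $T$ to the unique simple path in $F$ between the images of its endpoints, so that for every $e \in E(F)$ the set $S_e(\phi) := \{v \in V(T) : \phi(v) \in A_e\}$ (where $A_e$ is a fixed side of $F \setminus e$, and $X_e := A_e \cap W$) induces a \emph{minimum} $(X_e, W\setminus X_e)$-cut $\delta_T(S_e(\phi))$ in $T^b$. The $T$-hubbing obtained this way uses capacity $b(\delta_T(S_e(\phi)))$ on edge $e$, which is automatically at least $q^*(e)$ since $\phi|_W = \mathrm{id}$ forces $S_e(\phi) \cap W = X_e$; achieving equality for every $e$ simultaneously is precisely what is required.

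As a preliminary, I would reinterpret $q^*(e)$ as a cut value: the LP~\eqref{eq:q} is a maximum multicommodity flow LP from $X_e$ to $W\setminus X_e$ in $T^b$ (each demand $D_{ij}$ with $\{i,j\} \in \fundcut{F}{e}$ routing as flow along the unique $i$-$j$-path in $T$), so by max-flow/min-cut in the tree, $q^*(e) = \min\{b(C) : C$ an $(X_e, W\setminus X_e)$-cut in $T\}$.

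Root $F$ at an arbitrary terminal and let $A_e$ be the component of $F \setminus e$ not containing the root; then $\{X_e\}_{e \in E(F)}$ is a laminar family on $W$. A family $\{S_e\}_{e \in E(F)}$ with $S_e \cap W = X_e$ arises from a single $\phi$ if and only if it respects this laminar structure: $X_{e_1} \subseteq X_{e_2} \Rightarrow S_{e_1} \subseteq S_{e_2}$ and $X_{e_1} \cap X_{e_2} = \emptyset \Rightarrow S_{e_1} \cap S_{e_2} = \emptyset$ (in which case $\{e : v \in S_e\}$ is a chain of ancestor edges in $F$, uniquely pinning down $\phi(v)$). For each $e$, I would then pick $S_e^{\min}$, the unique smallest $S$ with $X_e \subseteq S \subseteq V(T) \setminus (W\setminus X_e)$ and $b(\delta_T(S)) = q^*(e)$; this is well-defined by submodularity of the cut function (intersections of minimum cuts remain minimum).

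The step I expect to be the main obstacle is verifying laminarity of $\{S_e^{\min}\}$ by uncrossing. For nested $X_{e_1} \subseteq X_{e_2}$, the submodular inequality $b(\delta_T(S_{e_1}^{\min})) + b(\delta_T(S_{e_2}^{\min})) \geq b(\delta_T(S_{e_1}^{\min} \cap S_{e_2}^{\min})) + b(\delta_T(S_{e_1}^{\min} \cup S_{e_2}^{\min}))$ combined with the observation that $S_{e_1}^{\min} \cap S_{e_2}^{\min}$ and $S_{e_1}^{\min} \cup S_{e_2}^{\min}$ separate $(X_{e_1}, W\setminus X_{e_1})$ and $(X_{e_2}, W\setminus X_{e_2})$ respectively forces every inequality to be an equality; minimality of $S_{e_1}^{\min}$ then gives $S_{e_1}^{\min} \cap S_{e_2}^{\min} = S_{e_1}^{\min}$, hence $S_{e_1}^{\min} \subseteq S_{e_2}^{\min}$. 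For disjoint $X_{e_1}, X_{e_2}$, the parallel posimodular inequality $b(\delta_T(S_1)) + b(\delta_T(S_2)) \geq b(\delta_T(S_1 \setminus S_2)) + b(\delta_T(S_2 \setminus S_1))$ (valid because the undirected cut function is symmetric submodular) combined with the fact that $S_{e_k}^{\min} \setminus S_{e_{3-k}}^{\min}$ still separates $(X_{e_k}, W\setminus X_{e_k})$ forces, by minimality, $S_{e_1}^{\min} \setminus S_{e_2}^{\min} = S_{e_1}^{\min}$, i.e., $S_{e_1}^{\min} \cap S_{e_2}^{\min} = \emptyset$. Once laminarity is in hand, the resulting $\phi$ satisfies $S_e(\phi) = S_e^{\min}$ for every $e \in E(F)$, delivering exactly the capacity $q^*(e)$ and completing the proof.
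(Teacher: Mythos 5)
Your proposal is correct and follows essentially the same route as the paper's proof: express $q^*(e)$ via max-flow/min-cut as the value of a minimum $(X_e, W\setminus X_e)$-cut in $T^b$, take for each edge the unique inclusion-minimal such cut, establish consistency (laminarity) of these cuts by uncrossing, and read off the embedding, so that the induced load on $e$ is exactly $b(\delta_T(S_e^{\min})) = q^*(e)$. The only cosmetic difference is that you handle the disjoint case via posimodularity, whereas the paper complements one of the cuts and reuses the same submodular uncrossing lemma; these are equivalent.
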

\begin{proof}

We will need the following lemma, which follows easily from standard uncrossing techniques.
\begin{lemma}\label{lem:uncrossing} Let $W_1 \subseteq W_2 \subseteq W$, and for $i \in \{1,2\}$, let $S_i$ be a minimum capacity $(W_i, W \setminus W_i)$-cut in $T^b$ (with $W_i \subseteq S_i$).
   Then $S_1 \cap S_2$ is a minimum capacity $(W_1, W \setminus W_1)$-cut, and $S_1 \cup S_2$ is a minimum capacity $(W_2, W \setminus W_2)$-cut.
\end{lemma}
\begin{proof}
    By submodularity of the cut function, 
    \begin{equation}\label{eq:uncrossing} b(\delta(S_1 \cap S_2)) + b(\delta(S_1 \cup S_2)) \leq b(\delta(S_1)) + b(\delta(S_2)). 
    \end{equation}
    But $S_1 \cap S_2$ is a $(W_1, W \setminus W_1)$-cut, and so by the definition of $S_1$, 
    $b(\delta(S_1 \cap S_2)) \geq b(\delta(S_1))$.
    Moreover, $S_1 \cup S_2$ is a $(W_2, W \setminus W_2)$-cut, and hence $b(\delta(S_1 \cup S_2)) \geq b(\delta(S_2)$.
    We deduce that \eqref{eq:uncrossing} holds with equality, and hence that $S_1 \cap S_2$ is a minimum $(W_1, W \setminus W_1)$-cut, and $S_1 \cup S_2$ a minimum $(W_2, W \setminus W_2)$-cut.
\end{proof}

    Pick an arbitrary leaf $r \in W$, and call it the root.
An edge $e$ of $F$ divides the terminal set into two, $W_e$ and $W \setminus W_e$, where we choose $W_e$ to \emph{not} contain the root. 
Let $S_e$ be a minimum $(W_e, W \setminus W_e)$ cut containing $W_e$ in the tree $T^b$, breaking ties by choosing $S_e$ to have minimum cardinality.

We now describe the $T$-embedding $\hmap$ into $F$.
We will define, for each internal node $w \in V(T) \setminus W$, an orientation $\oriented{F}{w}$ of $F$.
For each edge $e \in E(F)$, orient $e$ away from the root if $w \in S_e$, and otherwise orient $e$ towards the root.
We then define $\hmap(w)$ to be the unique sink node of $\oriented{F}{w}$, whose existence we guarantee as follows.
\begin{lemma}
    There is a unique sink node in $\oriented{F}{w}$.
   \end{lemma}
\begin{proof}
    We begin by showing that every node has outdegree at most $1$ in $\oriented{F}{w}$.
    Suppose for a contradiction that some node $u \in V(F)$ has two outgoing arcs $e, e'$ in $\oriented{F}{w}$.
    There are two cases to consider:
    \begin{enumerate}[(i)]
        \item Both arcs $e$ and $e'$ are oriented away from $r$. Then $w \in S_e$ and $w \in S_{e'}$.
            Now $W_{e} \subseteq (W \setminus W_{e'})$, and hence by Lemma~\ref{lem:uncrossing}, $\hat{S}_e := S_e \cap (V(T) \setminus S_{e'})$ is a minimum $(W_e, W \setminus W_e)$-cut in $T^b$. But since $w \notin \hat{S}_e$, $\hat{S}_e \subsetneq S_e$, contradicting the size minimality of $S_e$.
        \item One of $e, e'$ is oriented towards $r$, say $e'$. Then $w \in S_e$ and $w \notin S_{e'}$.
            Then simply note that $W_e \subseteq W_{e'}$, and hence (by Lemma~\ref{lem:uncrossing}) $S_e \cap S_{e'}$ is a minimum $(W_e, W \setminus W_e)$-cut in $T^b$. Again since $w \notin S_e \cap S_{e'}$, this contradicts the size minimality of $S_e$.
    \end{enumerate}
    By starting at an arbitrary node, and following the unique outgoing arc until we reach a node with no outgoing arcs (this may be a leaf, or not), the existence of some sink $v$ follows.
    To see that this sink is unique, observe that on any path in $\oriented{F}{w}$ terminating at $v$, all arcs must be oriented towards $v$ by the condition on the outdegree, and hence none of the nodes on the path aside from $v$ can possibly be a sink.
\end{proof}
We complete the definition of the $T$-embedding $\phi$ in the obvious way, by taking $\emap(vw)$ to be the unique simple path between $\hmap(v)$ and $\hmap(w)$ in $F$, for each $vw \in E(T)$. 

\medskip

Now let us consider $q^*(e)$ for some edge $e \in E(F)$.
Rewriting \eqref{eq:q}, we have
\begin{equation}\label{eq:mfmc} q^*(e) = \max_{D \in \treeUb{T}{b}}\sum_{i \in W_e, j \notin W_e} D_{ij}. \end{equation}
The right hand side of \eqref{eq:mfmc} can be seen as a maximum flow problem; send as much flow as possible in $T^b$ from $W_e$ to $W \setminus W_e$.
Invoking the max-flow min-cut theorem, we obtain 
\begin{equation}\label{eq:ucut}
    q^*(e) = b(\delta(S_e)).
\end{equation}

We show now that the $T$-embedding $\phi$ along with the capacity allocation $q^*$ defines a valid hierarchical hubbing solution.
Recall that $[\,\cdot\,]$ denotes the Iverson bracket.
Consider any edge $e$ in $F$. 
The capacity required on edge $e$ by the hierarchical hubbing solution induced by $\phi$ is
\begin{align*} &\phantom{=}\sum_{f=uv \in E(T)} \ind{\text{$\hmap(u)$ and $\hmap(v)$ are separated by $e$}}\,b(f)\\
&= \sum_{f=uv \in E(T)} \ind{\text{$\oriented{F}{u}$ and $\oriented{F}{v}$ orient $e$ in opposite directions}}\,b(f)\\
&= \sum_{f=uv \in E(T)} \ind{\text{exactly one of $u$ and $v$ is in $S_e$}}\,b(f)\\
&= b(\delta(S_e)).
\end{align*}

Combined with \eqref{eq:ucut}, this completes the proof.
\end{proof}

\paragraph*{General networks}
We now show how this result for tree networks can be leveraged to demonstrate the general case. 

\medskip

Let $(\phi, u)$ be any $F$-hubbing for $\treeUb{T}{b}$ into $G$.
By Theorem~\ref{thm:opthhtree}, there exists a $T$-hubbing $(\eta, q^*)$ for $\treeUb{T}{b}$ into $F$, where $q^*$ is the minimal possible capacity allocation discussed earlier.
We will essentially compose this hierarchical hubbing with $\phi$ to obtain a $T$-hubbing $(\rho, u)$ into $G$. 
    We define the $T$-embedding $\rho$ as follows.
    Let $\rho(v) = \phi(\eta(v))$ for all $v \in V(T)$. 
    For any edge $vw \in E(T)$, consider the path $\eta(vw)$ in $F$, and write it in terms of its edges: $\eta(vw) = e_1e_2\ldots e_t$.
  Take $\rho(vw)$ to be any simple $\rho(v)$-$\rho(w)$ path in $G$ contained in the concatenation of the paths $\phi(e_1)$, $\phi(e_2), \ldots, \phi(e_t)$.
   Clearly $\rho$ does define a $T$-embedding into $G$.

    We have that for any $e \in E(G)$, 
    \[ u(e) \geq \sum_{f \in E(F)} \ind{e \in \phi(f)}\;q^*(f). \]
    This follows from the definition of a $F$-hubbing for $\treeUb{T}{b}$, combined with~\eqref{eq:q}.
    Now since $(\eta, q^*)$ is a $T$-hubbing for $\treeUb{T}{b}$ into $F$, we have that
    \[ q^*(f) \geq \sum_{f' \in E(T)} \ind{f' \in \eta(f)}\; b(f'). \]
    Hence
    \begin{align*}
       u(e) &\geq \sum_{f' \in E(T)}\sum_{f \in E(F)} \ind{e \in \phi(f)} \cdot \ind{f' \in \eta(f)}\;b(f')\\
      &\geq \sum_{f' \in E(T)} \ind{e \in \rho(f')}\;b(f'). \end{align*}
  So $(\rho, u)$ is indeed a $T$-hubbing for $\treeUb{T}{b}$ into $G$. The proof of Theorem~\ref{thm:opthh}, and hence Theorem~\ref{thm:main}, is complete.

\section{Connection to the VPN Conjecture}\label{sec:vpnconj}
A well-known conjecture in the area was the \emph{VPN Conjecture}. This was resolved by Goyal et al.~\cite{GOS13}, who proved
\begin{theorem}[\cite{GOS13}]
    There is an optimal solution to the RND problem with oblivious routing for the hose model whose support is a tree.
\end{theorem}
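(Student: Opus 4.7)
The plan is to reduce the VPN Conjecture to a clean comparison between arbitrary oblivious routings and single-hub routings. By Remark~\ref{rem:tree}, any tree routing is a hierarchical hubbing; combining this with the theorem of Gupta et al.~\cite{guptavpn01} that the cheapest tree routing for the hose model is a single-hub routing, it suffices to show that some single-hub routing has cost at most that of every oblivious template.

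Concretely, I would take the hub to be $h^{*} = \arg\min_{h \in V} \sum_{i \in W} b_i\, d(i,h)$, where $d$ denotes shortest-path distance under $c$. The single-hub routing through $h^{*}$ has total cost $\sum_{i \in W} b_i\, d(i, h^{*})$: reserving capacity $b_i$ on the shortest $i$-$h^{*}$ path suffices for every $D \in \mathcal{H}(b)$, since $\sum_{j} D_{ij} \leq b_i$. This is the easy upper bound.

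The substantive step is the matching lower bound: for any oblivious template $\mathcal{P}$ with capacity $u$ from~\eqref{eq:u2}, show $\sum_e c(e) u(e) \geq \sum_i b_i\, d(i, h^{*})$. The ``universal'' bound $\sum_e c(e) u(e) \geq \max_{D \in \mathcal{H}(b)} \sum_{ij} D_{ij} d(i,j)$ does not suffice: its LP dual is $\min \sum_i b_i y_i$ subject to $y_i + y_j \geq d(i,j)$, and plugging in the hub metric $y = d(\cdot,h)$ shows this dual is only an upper bound on the $1$-median cost, not equal to it. So the argument must exploit that each $u(e)$ is separately set by a (possibly different) worst-case demand in $\mathcal{H}(b)$.

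The hard part is producing this sharper lower bound. The approach of Goyal et al.~\cite{GOS13} extracts, from $\mathcal{P}$, a collection of demand patterns---one per cut in a carefully chosen laminar family---that jointly witness $\sum_i b_i d(i, h^{*})$ as a lower bound on $\sum_e c(e)u(e)$. The extraction proceeds via an uncrossing and exchange argument on the paths of $\mathcal{P}$, similar in flavor to Lemma~\ref{lem:uncrossing} above, together with an analysis of how the template crosses that laminar family. Making this precise is the real content of the conjecture; the reductions sketched in the first two paragraphs become routine once this lower bound is in hand.
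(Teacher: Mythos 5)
There is no proof of this statement in the paper to compare against: the theorem is imported verbatim from Goyal, Olver and Shepherd~\cite{GOS13}, and the present note only uses it. Judged on its own terms, your proposal is an accurate road map but not a proof. The framing is right and partly insightful: the reduction ``optimal tree routing $=$ hub routing'' via Gupta et al.~\cite{guptavpn01}, the upper bound $\sum_i b_i\,d(i,h^{*})$ for the $1$-median hub, and especially your observation that the generic lower bound $\sum_e c(e)u(e) \geq \max_{D \in \mathcal{H}(b)} \sum_{ij} D_{ij}\,d(i,j)$ cannot close the gap because its dual only upper-bounds the $1$-median cost. That last point correctly identifies why the VPN Conjecture resisted easy attacks.

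But the entire content of the theorem is the matching lower bound, and your third paragraph replaces it with a description of what a proof would have to do (``extracts \dots\ a collection of demand patterns \dots\ via an uncrossing and exchange argument''), ending with the admission that ``making this precise is the real content.'' That is a gap, not a step. For the record, the argument of~\cite{GOS13} does not proceed by uncrossing a laminar family of cuts in $G$ in the style of Lemma~\ref{lem:uncrossing}; it reduces the general statement to a routing problem on \emph{ring} networks (the ``pyramidal routing'' problem of Grandoni et al.) and resolves that case by a separate charging argument, so even the shape of the missing step is not quite the one you sketch. As submitted, the proposal establishes the easy direction and a useful negative observation, but leaves the theorem unproven.
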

This result leads to a polynomial time algorithm to solve the RND problem for the hose model exactly: indeed, Gupta et al.~\cite{guptavpn01} had previously provided a polynomial time algorithm that computes the optimal tree solution.
The algorithm simply finds a hub routing solution. In other words, for each node $v$, compute the sum (weighted by the hose capacities $b_i$) of the lengths of the shortest paths from each terminal to $v$; then choosing $v^*$ that minimizes this quantity, allocate $b_i$ units of capacity along the $i$-$v^*$ shortest path for each $i \in W$, additively\footnote{A technicality: the marginals should be ``defining'' in the sense of Definition~\ref{def:defining}; here, this means that $b_i \leq \tfrac12\sum_j b_j$ for all $i \in W$.}. 
The routing is is simply a hub routing centered at $v^*$: for each pair $i,j$ of terminals, the path from $i$ to $j$ is obtained by appending the shortest path from $i$ to $v^*$ to that from $v^*$ to $j$.
This is of course a $R^{b'}$-\hh{} solution, where $R$ is a star with leaf set $W$, and the capacity of edge $ir$ (with $r$ being the internal node of the star) is $b'(ir) = b_i$.

The result of Gupta et al.~\cite{guptavpn01} can be seen as a precursor to, and evidence for, the VPN Conjecture.
Similarly, the result of this paper, which is a generalization of the result of Gupta et al., can be seen as evidence for the following conjecture proposed in~\cite{OS2010}.
\begin{gvpnconjecture}
    There is an optimal solution to the RND problem with universe $\treeUb{T}{b}$ that is a $T$-hubbing solution.
\end{gvpnconjecture}
Theorem~\ref{thm:main} shows that the following version of the conjecture, while seemingly weaker, is equivalent:
\begin{gvpnconjecturetwo}
    There is an optimal solution to the RND problem with universe $\treeUb{T}{b}$ that is a hierarchical hubbing solution.
\end{gvpnconjecturetwo}
It is known that the optimal $T$-hubbing solution for $\treeUb{T}{b}$ is always within a constant factor of the optimal oblivious routing solution~\cite{OS2010}.
This in particular implies that the RND problem for the generalized hose model is constant approximable.
A positive resolution to the Generalized VPN Conjecture would imply that the RND problem for $\treeUb{T}{b}$ can be solved optimally in polynomial time.

\paragraph{Acknowledgements} The author is very grateful to Alexandre Fr\'echette and Bruce Shepherd, both for asking the question that led to this note, and for their help in the exposition of the main proof.

\bibliographystyle{abbrv}
\bibliography{rnd_all}

\end{document}